\documentclass{IEEEtran}

\usepackage{amsmath,amssymb,cite,multirow,verbatim,url,graphicx}
\usepackage[algo2e]{algorithm2e}

\newtheorem{theorem}{Theorem}

\newtheorem{algorithm}{Algorithm}
\newcommand{\GF}{\mathrm{GF}}

\title{Complexity Analysis of Reed--Solomon Decoding over $\GF(2^m)$ Without Using Syndromes}
\author{
	\authorblockN{Ning Chen and Zhiyuan Yan}\\
	\authorblockA{Department of Electrical and Computer Engineering\\
		Lehigh University, Bethlehem, Pennsylvania 18015, USA\\
		E-mail: \{nic6, yan\}@lehigh.edu}
	\thanks{The material in this paper was presented in part at the IEEE Workshop on Signal Processing Systems, Shanghai, China, October 2007.}
}

\begin{document}
\maketitle

\begin{abstract}
There has been renewed interest in decoding Reed--Solomon (RS) codes without using syndromes recently.
In this paper, we investigate the complexity of syndromeless decoding, and compare it to that of syndrome-based decoding.
Aiming to provide guidelines to practical applications, our complexity analysis focuses on RS codes over characteristic-2 fields, for which some \emph{multiplicative} FFT techniques are not applicable.
Due to moderate block lengths of RS codes in practice, our analysis is complete, without big $O$ notation.
In addition to fast implementation using \emph{additive} FFT techniques, we also consider direct implementation, which is still relevant for RS codes with moderate lengths.
For high rate RS codes, when compared to syndrome-based decoding algorithms, not only syndromeless decoding algorithms require more field operations regardless of implementation, but also decoder architectures based on their direct implementations have higher hardware costs and lower throughput.
We also derive tighter bounds on the complexities of fast polynomial multiplications based on Cantor's approach and the fast extended Euclidean algorithm.
\end{abstract}

\begin{keywords}
Reed--Solomon codes, Decoding, Complexity theory, Galois fields, Discrete Fourier transforms, Polynomials
\end{keywords}

\section{Introduction}
Reed--Solomon (RS) codes are among the most widely used error control codes, with applications in space communications, wireless communications, and consumer electronics~\cite{Wicker94}.
As such, efficient decoding of RS codes is of great interest.
The majority of the applications of RS codes use syndrome-based decoding algorithms such as the Berlekamp--Massey algorithm (BMA)~\cite{Berlekamp68} or the extended Euclidean algorithm (EEA)~\cite{Sugiyama75}.
Alternative hard decision decoding methods  for RS codes without using syndromes were considered in~\cite{Shiozaki88,Shiozaki90,Gao03}.
As pointed out in~\cite{Fedorenko05,Fedorenko06a}, these algorithms belong to the class of frequency-domain algorithms and are related to the Welch--Berlekamp algorithm~\cite{Welch83}.
In contrast to syndrome-based decoding algorithms, these algorithms do not compute syndromes and avoid the Chien search and Forney's formula.
Clearly, this difference leads to the question whether these algorithms offer lower complexity than syndrome-based decoding, especially when fast Fourier transform (FFT) techniques are applied~\cite{Gao03}.

Asymptotic complexity of syndromeless decoding was analyzed in~\cite{Gao03}, and in~\cite{Fedorenko05} it was concluded that syndromeless decoding has the same asymptotic complexity $O(n \log^2 n )$\footnote{Note that all the logarithms in this paper are to base two.} as syndrome-based decoding~\cite{Justesen76}.
However, existing asymptotic complexity analysis is limited in several aspects.
For example, for RS codes over Fermat fields $\GF(2^{2^r}+1)$ and other prime fields~\cite{Shiozaki90,Gao03}, efficient multiplicative FFT techniques lead to an asymptotic complexity of $O(n \log^2 n )$.
However, such FFT techniques do not apply to characteristic-2 fields, and hence this complexity is not applicable to RS codes over characteristic-2 fields.
For RS codes over arbitrary fields, the asymptotic complexity of syndromeless decoding based on multiplicative FFT techniques was shown to be $O(n \log^2 n \log\log n)$~\cite{Gao03}.
Although they are applicable to RS codes over characteristic-2 fields, the complexity has large coefficients and multiplicative FFT techniques are less efficient than fast implementation based on additive FFT for RS codes with moderate block lengths~\cite{Gao03,Gathen96a,Gathen03}.
As such, asymptotic complexity analysis provides little help to practical applications.

In this paper, we analyze the complexity of syndromeless decoding and compare it to that of syndrome-based decoding.
Aiming to provide guidelines to system designers, we focus on the decoding complexity of RS codes over $\GF(2^m)$.
Since RS codes in practice have moderate lengths, our complexity analysis provides not only the coefficients for the most significant terms, but also the following terms.
Due to their moderate lengths, our comparison is based on two types of implementations of syndromeless decoding and syndrome-based decoding: direct implementation and fast implementation based on FFT techniques.
Direct implementations are often efficient when decoding RS codes with moderate lengths and have widespread applications; thus, we consider both computational complexities, in terms of field operations, and hardware costs and throughputs.
For fast implementations, we consider their computational complexities only and their hardware implementations are beyond the scope of this paper.
We use \emph{additive} FFT techniques based on Cantor's approach~\cite{Cantor89} since this approach achieves small coefficients~\cite{Gao03,Gathen96a} and hence is more suitable for moderate lengths.
In contrast to some previous works~\cite{Gathen03,Khodadad05}, which count field multiplications and additions together, we differentiate the multiplicative and additive complexities in our analysis.

The main contributions of the papers are:
\begin{itemize}
	\item We derived a tighter bound on the complexities of fast polynomial multiplication based on Cantor's approach;
	\item We also obtained a tighter bound on the complexity of the fast extended Euclidean algorithm (FEEA) for general partial greatest common divisor (GCD) computation;
	\item We evaluated the complexities of syndromeless decoding based on different implementation approaches and compare them with their counterparts of syndrome-based decoding; Both errors-only and errors-and-erasures decoding are considered.
	\item We compare the hardware costs and throughputs of direct implementations for syndromeless decoders with those for syndrome-based decoders.
\end{itemize}

The rest of the paper is organized as follows.
To make this paper self-contained, in Section~\ref{sec:rev} we briefly review FFT algorithms over finite fields, fast algorithms for polynomial multiplication and division over $\GF(2^m)$, the FEEA, and syndromeless decoding algorithms.
Section~\ref{sec:time} presents both computational complexity and decoder architectures of direct implementations of syndromeless decoding, and compare them with their counterparts for syndrome-based decoding algorithms.
Section~\ref{sec:tran} compares the computational complexity of fast implementations of syndromeless decoding with that of syndrome-based decoding.
In Section~\ref{sec:case}, case studies on two RS codes are provided and errors-and-erasures decoding is discussed.
The conclusions are given in Section~\ref{sec:con}.

\section{Background}\label{sec:rev}

\subsection{Fast Fourier Transform Over Finite Fields}
For any $n$ ($n \mid q-1$) distinct elements $a_0, a_1, \dots, a_{n-1} \in \mathrm{GF}(q)$, the transform from $\boldsymbol{f} = (f_0, f_1, \dots, f_{n-1})^T$ to $\boldsymbol{F} \triangleq \bigl(f(a_0), f(a_1), \dots, f(a_{n-1})\bigr)^T$, where $f(x) = \sum_{i=0}^{n-1}f_i x^i \in \mathrm{GF}(q)[x]$, is called a discrete Fourier transform (DFT), denoted by $\boldsymbol{F}=\mathrm{DFT}(\boldsymbol{f})$.
Accordingly, $\boldsymbol{f}$ is called the inverse DFT of $\boldsymbol{F}$, denoted by $\boldsymbol{f}=\mathrm{IDFT}(\boldsymbol{F})$.
Asymptotically fast Fourier transform (FFT) algorithm over $\mathrm{GF}(2^m)$ was proposed in~\cite{Wang88}.
Reduced-complexity cyclotomic FFT (CFFT) was shown to be efficient for moderate lengths in~\cite{Chen07}.

\subsection{Polynomial Multiplication Over $\GF(2^m)$ By Cantor's Approach}
A fast polynomial multiplication algorithm using additive FFT was proposed by Cantor~\cite{Cantor89} for $\GF(q^{q^m})$, where $q$ is prime, and it was generalized to $\GF(q^m)$ in~\cite{Gathen96a}.
Instead of evaluating and interpolating over the multiplicative subgroups as in multiplicative FFT techniques, Cantor's approach uses additive subgroups.
Cantor's approach relies on two algorithms: multipoint evaluation (MPE)~\cite[Algorithm 3.1]{Gathen96a} and multipoint interpolation (MPI)~\cite[Algorithm 3.2]{Gathen96a}.

Suppose the degree of the product of two polynomials over $\GF(2^m)$ is less than $h$ ($h\le2^m$), the product can be obtained as follows: First, the two operand polynomials are evaluated using the MPE algorithm; The evaluation results are then multiplied point-wise; Finally the product polynomial is obtained by the MPI algorithm to interpolate the point-wise multiplication results.
The polynomial multiplication requires at most $\frac{3}{2}h\log^2h +\frac{15}{2}h\log h+8h$ multiplications over $\GF(2^m)$ and $\frac{3}{2}h\log^2h +\frac{29}{2}h\log h+4h+9$ additions over $\GF(2^m)$~\cite{Gathen96a}.
For simplicity, henceforth in this paper, all arithmetic operations are over $\GF(2^m)$ unless specified otherwise.

\subsection{Polynomial Division By Newton Iteration}
Suppose $a,b \in \mathrm{GF}(q)[x]$ are two polynomials of degrees $d_0+d_1$ and $d_1$ $(d_0, d_1 \ge 0)$, respectively.
To find the quotient polynomial $q$ and the remainder polynomial $r$ satisfying $a = qb + r$ where $\deg r < d_1$, a fast polynomial division algorithm is available~\cite{Gathen03}.
Suppose $\mathrm{rev}_h(a) \triangleq x^h a(\frac{1}{x})$, the fast algorithm first computes the inverse of $\mathrm{rev}_{d_1}(b) \bmod x^{d_0+1}$ by Newton iteration.
Then the reverse quotient is given by $q^*=\mathrm{rev}_{d_0+d_1}(a)\mathrm{rev}_{d_1}(b)^{-1} \bmod x^{d_0+1}$.
Finally, the actual quotient and remainder are given by $q=\mathrm{rev}_{d_0}(q^*)$ and $r=a-qb$.

Thus, the complexity of polynomial division with remainder of a polynomial $a$ of degree $d_0+d_1$ by a monic polynomial $b$ of degree $d_1$ is at most $4\mathsf{M}(d_0) + \mathsf{M}(d_1) + O(d_1)$ multiplications/additions when $d_1 \ge d_0$~\cite[Theorem~9.6]{Gathen03}, where $\mathsf{M}(h)$ stands for the numbers of multiplications/additions required to multiply two polynomials of degree less than $h$.

\subsection{Fast Extended Euclidean Algorithm}
Let $r_0$ and $r_1$ be two monic polynomials with $\deg r_0 > \deg r_1$ and we assume $s_0 = t_1 = 1, s_1 = t_0 = 0$.
Step~$i$ ($i= 1, 2, \cdots, l$) of the EEA computes $\rho_{i+1} r_{i+1} = r_{i-1} - q_i r_i$, $\rho_{i+1} s_{i+1} = s_{i-1} - q_i s_i$, and $\rho_{i+1} t_{i+1} = t_{i-1} - q_i t_i$ so that the sequence $r_{i}$ are monic polynomials with strictly decreasing degrees.
If the GCD of $r_0$ and $r_1$ is desired, the EEA terminates when $r_{l+1}=0$.
For $1 \le i \le l$, $R_i \triangleq Q_i \cdots Q_1 R_0$, where $Q_i = \bigl[\begin{smallmatrix} 0 & 1\\ \frac{1}{\rho_{i+1}} & -\frac{q_i}{\rho_{i+1}}\end{smallmatrix}\bigr]$ and
$R_0 = \bigl[\begin{smallmatrix} 1 &0\\ 0 & 1\end{smallmatrix}\bigr]$.
	Then it can be easily verified that $R_i = \bigl[\begin{smallmatrix} s_i & t_i\\ s_{i+1} & t_{i+1}\end{smallmatrix}\bigr]$ for $0 \le i \le l$.
In RS decoding, the EEA stops when the degree of $r_i$ falls below a certain threshold for the first time, and we refer to this as partial GCD.

The FEEA in~\cite{Gathen03,Khodadad06} costs no more than $\bigl(22\mathsf{M}(h)+O(h)\bigr)\log{h}$ multiplications/additions when $n_0 \le 2h$~\cite{Khodadad05}.

\subsection{Syndrome-based and Syndromeless Decoding}
Over a finite field $\mathrm{GF}(q)$, suppose $a_0,a_1,\dots,a_{n-1}$ are $n$ ($n\leq q$) distinct elements and $g_0(x)\triangleq\prod_{i=0}^{n-1}(x-a_i)$.
Let us consider an RS code over $\mathrm{GF}(q)$ with length $n$, dimension $k$, and minimum Hamming distance $d = n - k + 1$.
A message polynomial $m(x)$ of degree less than $k$ is encoded to a codeword $(c_0, c_1, \cdots, c_{n-1})$ with $c_i = m(a_i)$, and the received vector is given by $\boldsymbol{r}=(r_0, r_1, \cdots, r_{n-1})$.

The syndrome-based hard decision decoding consists of the following steps: syndrome computation, key equation solver, the Chien search, and Forney's formula.
Further details are omitted, and interested readers are referred to~\cite{Berlekamp68,Wicker94,Moon05}.
We also consider the following two syndromeless algorithms:
\begin{algorithm}~\cite{Shiozaki88,Shiozaki90},~\cite[Algorithm 1]{Gao03}\label{alg:gao}
	\renewcommand{\labelenumi}{\ref{alg:gao}.\theenumi}
	\begin{enumerate}
	   \item Interpolation: Construct a polynomial $g_1(x)$ with $\deg g_1(x) < n$ such that $g_1(a_i)=r_i$ for $i=0,1,\dots,n-1$.
		\item Partial GCD:
			Apply the EEA to $g_0(x)$ and $g_1(x)$, and find $g(x)$ and $v(x)$ that maximize $\deg g(x)$ while satisfying $v(x)g_1(x)\equiv g(x) \bmod g_0(x)$ and $\deg g(x) < \frac{n+k}{2}$.
		\item Message Recovery:
			If $v(x) \mid g(x)$, the message polynomial is recovered by $m(x) = \frac{g(x)}{v(x)}$, otherwise output ``decoding failure.''
	\end{enumerate}
\end{algorithm}
\begin{algorithm}{\cite[Algorithm 1a]{Gao03}}
	\renewcommand{\labelenumi}{\ref{alg:gao_mod}.\theenumi}
	\begin{enumerate}
		\item Interpolation: Construct a polynomial $g_1(x)$ with $\deg g_1(x) < n$ such that $g_1(a_i)=r_i$ for $i=0,1,\dots,n-1$.
		\item Partial GCD:
			Find $s_0(x)$ and $s_1(x)$ satisfying $g_0(x) = x^{n-d+1} s_0(x) + r_0(x)$ and $g_1(x)=x^{n-d+1}s_1(x) + r_1(x)$, where $\deg r_0(x) \leq n-d$ and $\deg r_1(x) \leq n-d$.
			Apply the EEA to $s_0(x)$ and $s_1(x)$, and stop when the remainder $g(x)$ has degree less than $\frac{d-1}{2}$.
			Thus, we have $v(x)s_1(x) + u(x)s_0(x) = g(x)$.
		\item Message Recovery:
			If $v(x)\nmid g_0(x)$, output ``decoding failure''; otherwise, first compute $q(x) \triangleq \frac{g_0(x)}{v(x)}$, and then obtain $m'(x) = g_1(x) + q(x)u(x)$.
			If $\deg m'(x) < k$, output $m'(x)$; otherwise output ``decoding failure.''
	\end{enumerate}
\label{alg:gao_mod}
\end{algorithm}
Compared with Algorithm~\ref{alg:gao}, the partial GCD step of Algorithm~\ref{alg:gao_mod} is simpler but its message recovery step is more complex~\cite{Gao03}.

\section{Direct Implementation of Syndromeless Decoding}\label{sec:time}
\subsection{Complexity Analysis}\label{sec:time_ana}
We analyze the complexity of direct implementation of
Algorithms~\ref{alg:gao}~and~\ref{alg:gao_mod}.
For simplicity, we assume $n-k$ is even and hence $d-1=2t$.

First, $g_1(x)$ in Steps~\ref{alg:gao}.1~and~\ref{alg:gao_mod}.1 is given by $\mathrm{IDFT}(\boldsymbol{r})$.
Direct implementation of Steps~\ref{alg:gao}.1~and~\ref{alg:gao_mod}.1 follows Horner's rule, and requires  $n(n-1)$ multiplications and $n(n-1)$ additions~\cite{Komo02}.

Steps~\ref{alg:gao}.2~and~\ref{alg:gao_mod}.2 both use the EEA.
The Sugiyama tower (ST)~\cite{Sugiyama75,Berlekamp94} is well known as an efficient direct implementation of the EEA.
For Algorithm~\ref{alg:gao}, the ST is initialized by $g_1(x)$ and $g_0(x)$, whose degrees are at most $n$.
Since the number of iterations is $2t$, Step~\ref{alg:gao}.2 requires $4t(n+2)$ multiplications and $2t(n+1)$ additions.
For Algorithm~\ref{alg:gao_mod}, the ST is initialized by $s_0(x)$ and $s_1(x)$, whose degrees are at most $2t$ and the iteration number is at most $2t$.

Step~\ref{alg:gao}.3 requires one polynomial division, which can be implemented by using $k$ iterations of cross multiplications in the ST.
Since $v(x)$ is actually the error locator polynomial~\cite{Gao03}, $\deg v(x) \leq t$.
Hence, this requires $k(k+2t+2)$ multiplications and $k(t+2)$ additions.
However, the result of the polynomial division is scaled by a nonzero constant.
That is, cross multiplications lead to $\bar{m}(x)=am(x)$.
To remove the scaling factor $a$, we can first compute $\frac{1}{a} = \frac{\mathrm{lc}(g(x))}{\mathrm{lc}(\bar{m}(x)) \mathrm{lc}(v(x))}$, where $\mathrm{lc}(f)$ denotes the leading coefficient of a polynomial $f$, and then obtain $m(x)=\frac{1}{a}\bar{m}(x)$.
This process requires one inversion and $k+2$ multiplications.

Step~\ref{alg:gao_mod}.3 involves one polynomial division, one polynomial multiplication, and one polynomial addition, and their complexities depend on the degrees of $v(x)$ and $u(x)$, denoted as $d_v$ and $d_u$, respectively.
In the polynomial division, let the result of the ST be $\bar{q}(x)=aq(x)$.
The scaling factor is recovered by $\frac{1}{a} = \frac{1}{\mathrm{lc}(\bar{q}(x))\mathrm{lc}(v(x))}$.
Thus it requires one inversion, $(n-d_v + 1)(n+d_v+3)+n-d_v+2$ multiplications, and $(n - d_v + 1)(d_v +2)$ additions to obtain $q(x)$.
The polynomial multiplication needs $(n-d_v+1)(d_u+1)$ multiplications and $(n-d_v+1)(d_u+1)-(n-d_v + d_u + 1)$ additions, and the polynomial addition needs $n$ additions since $g_1(x)$ has degree at most $n-1$.
The total complexity of Step~\ref{alg:gao_mod}.3 includes $(n-d_v+1)(n+d_v+d_u+5)+1$ multiplications, $(n-d_v+1)(d_v+d_u+2)+n-d_u$ additions, and one inversion.
Consider the worst case for multiplicative complexity, where $d_v$ should be as small as possible.
But $d_v > d_u$, so the highest multiplicative complexity is $(n-d_u)(n+2d_u+6)+1$, which maximizes when $d_u = \frac{n-6}{4}$.
And we know $d_u < d_v \le t$.
Let $R$ denote the code rate.
So for RS codes with $R > \frac{1}{2}$, the maximum complexity is $n^2+nt-2t^2+5n-2t+5$ multiplications, $2nt-2t^2+2n+2$ additions, and one inversion.
For codes with $R \le \frac{1}{2}$, the maximum complexity is $\frac{9}{8}n^2+\frac{9}{2}n+\frac{11}{2}$ multiplications, $\frac{3}{8}n^2+\frac{3}{2}n+\frac{3}{2}$ additions, and one inversion.

Table~\ref{tab:gao_time} lists the complexity of direct implementation of Algorithms~\ref{alg:gao}~and~\ref{alg:gao_mod}, in terms of operations in $\mathrm{GF}(2^m)$.
\begin{table*}[htbp]
	\centering
	\caption{Direct Implementation Complexities of Syndromeless Decoding Algorithms}
	\begin{tabular}{|c|c|c|c|c|}
		\hline
		\multicolumn{2}{|c|}{} & Multiplications & Additions & Inversions\\
		\hline
		\multicolumn{2}{|c|}{Interpolation} & $n(n-1)$ & $n(n-1)$ & 0\\
		\hline
		\multirow{2}{*}{Partial GCD} & Algorithm~\ref{alg:gao} & $4t(n+2)$ & $2t(n+1)$ & 0\\
		\cline{2-5}
		& Algorithm~\ref{alg:gao_mod} & $4t(2t+2)$ & $2t(2t+1)$ & 0\\
		\hline
		\multirow{2}{*}{Message Recovery} & Algorithm~\ref{alg:gao} & $(k+2)(k+1)+2kt$ & $k(t+2)$ & 1\\
		\cline{2-5}
		& Algorithm~\ref{alg:gao_mod} & $n^2+nt-2t^2+5n-2t+5$ & $2nt-2t^2+2n+2$ & 1\\
		\hline
		\multirow{2}{*}{Total} & Algorithm~\ref{alg:gao} & $2n^2+2nt+2n+2t+2$ & $n^2+3nt-2t^2+n-2t$ & 1\\
		\cline{2-5}
		& Algorithm~\ref{alg:gao_mod} & $2n^2+nt+6t^2+4n+6t+5$ & $n^2+2nt+2t^2+n+2t+2$ & 1\\
		\hline
	\end{tabular}
	\label{tab:gao_time}
\end{table*}
The complexity of syndrome-based decoding is given in Table~\ref{tab:syn_tim}.
The numbers for syndrome computation, the Chien search, and Forney's formula are from~\cite{Mandelbaum71}.
We assume the EEA is used for the key equation solver since it was shown to be equivalent to the BMA~\cite{Heydtmann00}.
The ST is used to implement the EEA.
Note that the overall complexity of syndrome-based decoding can be reduced by sharing computations between the Chien search and Forney's formula.
However, this is not taken into account in Table~\ref{tab:syn_tim}.
\begin{table*}[htbp]
	\centering
	\caption{Direct Implementation Complexity of Syndrome-Based Decoding}
	\begin{tabular}{|c|c|c|c|}
		\hline
		& Multiplications & Additions & Inversions\\
		\hline
		Syndrome Computation & $2t(n-1)$ & $2t(n-1)$ & 0\\
		\hline
		Key Equation Solver & $4t(2t+2)$ & $2t(2t+1)$ & 0\\
		\hline
		Chien Search & $n(t-1)$ & $nt$ & 0\\
		\hline
		Forney's Formula & $2t^2$ & $t(2t-1)$ & $t$\\
		\hline
		Total & $3nt+10t^2-n+6t$ & $3nt+6t^2-t$ & $t$\\
		\hline
	\end{tabular}
	\label{tab:syn_tim}
\end{table*}

\subsection{Complexity Comparison}\label{sec:dir_comp}

For any application with fixed parameters $n$ and $k$, the comparison between the algorithms is straightforward using the complexities in Tables~\ref{tab:gao_time}~and~\ref{tab:syn_tim}.
Below we try to determine which algorithm is more suitable for a given code rate.
The comparison between different algorithms is complicated by three different types of field operations.
However, the complexity is dominated by the number of multiplications: in hardware implementation, both multiplication and inversion over $\GF(2^m)$ requires an area-time complexity of $O(m^2)$~\cite{Yan03}, whereas an addition requires an area-time complexity of $O(m)$; the complexity due to inversions is negligible since the required number of inversions is much smaller than those of multiplications; the numbers of multiplications and additions are both $O(n^2)$.
Thus, we focus on the number of multiplications for simplicity.

Since $t=\frac{1-R}{2}n$ and $k=Rn$, the multiplicative complexities of Algorithms~\ref{alg:gao}~and~\ref{alg:gao_mod} are $(3-R)n^2 + (3-R)n + 2$ and $\frac{1}{2}(3R^2-7R+8)n^2 + (7-3R)n + 5$, respectively, while the complexity of syndrome-based decoding is $\frac{5R^2-13R+8}{2}n^2 + (2-3R)n$.
It is easy to verify that in all these complexities, the quadratic and linear coefficients are of the same order of magnitude; hence, we consider only the quadratic terms.
Considering only the quadratic terms, Algorithm~\ref{alg:gao} is less efficient than syndrome-based decoding when $R > \frac{1}{5}$.
If the Chien search and Forney's formula share computations, this threshold will be even lower.
Comparing the highest terms, Algorithm~\ref{alg:gao_mod} is less efficient than
the syndrome-based algorithm regardless of $R$.
It is easy to verify that the most significant term of the difference between Algorithms~\ref{alg:gao}~and~\ref{alg:gao_mod} is $\frac{(1-R)(3R-2)}{2}n^2$.
So when implemented directly, Algorithm~\ref{alg:gao} is less efficient than Algorithm~\ref{alg:gao_mod} when $R > \frac{2}{3}$.
Thus, Algorithm~\ref{alg:gao} is more suitable for codes with very low rate, while syndrome-based decoding is the most efficient for high rate codes.

\subsection{Hardware Costs, Latency, and Throughput}\label{sec:dir_hw}
We have compared the computational complexities of syndromeless decoding algorithms with those of syndrome-based algorithms.
Now we compare these two types of decoding algorithms from a hardware perspective: we will compare the hardware costs, latency, and throughput of decoder architectures based on direct implementations of these algorithms.
Since our goal is to compare syndrome-based algorithms with syndromeless algorithms, we select our architectures so that the comparison is on a level field.
Thus, among various decoder architectures available for syndrome-based decoders in the literature, we consider the hypersystolic architecture in~\cite{Berlekamp94}.
Not only is it an efficient architecture for syndrome-based decoders, some of its functional units can be easily adapted to implement syndromeless decoders.
Thus, decoder architectures for both types of decoding algorithms have the same structure with some functional units the same; this allow us to focus on the difference between the two types of algorithms.
For the same reason, we do not try to optimize the hardware costs, latency, or throughput using circuit level techniques since such techniques will benefit the architectures for both types of decoding algorithms in a similar fashion and hence does not affect the comparison.

The hypersystolic architecture~\cite{Berlekamp94} contains three functional units: the power sums tower (PST) computing the syndromes, the ST solving the key equation, and the correction tower (CT) performing the Chien search and Forney's formula.
The PST consists of $2t$ systolic cells, each of which comprises of one multiplier, one adder, five registers, and one multiplexer.
The ST has $\delta+1$ ($\delta$ is the maximal degree of the input polynomials) systolic cells, each of which contains one multiplier, one adder, five registers, and seven multiplexers.
The latency of the ST is $6\gamma$ clock cycles~\cite{Berlekamp94}, where $\gamma$ is the number of iterations.
For the syndrome-based decoder architecture, $\delta$ and $\gamma$ are both $2t$.
The CT consists of $3t+1$ evaluation cells, two delay cells, along with two joiner cells, which also perform inversions.
Each evaluation cell needs one multiplier, one adder, four registers, and one multiplexer.
Each delay cell needs one register.
The two joiner cells altogether need two multipliers, one inverter, and four registers.
Table~\ref{tab:hw_syn} summarizes the hardware costs of the decoder architecture for syndrome-based decoders described above.
For each functional unit, we also list the latency (in clock cycles), as well as the number of clock cycles it needs to process one received word, which is proportional to the inverse of the throughput.
In theory, the computational complexities of steps of RS decoding depend on the received word, and the total complexity is obtained by first computing the \emph{sum} of complexities for all the steps and then considering the worst case scenario (cf.  Section~\ref{sec:time_ana}).
In contrast, the hardware costs, latency, and throughput of \emph{every} functional unit are dominated by the worst case scenario; the numbers in Table~\ref{tab:hw_syn} all correspond to the worst case scenario.
The critical path delay (CPD) is the same, $T_{mult} + T_{add} + T_{mux}$, for the PST, ST, and CT. In addition to the registers required by the PST, ST, and CT, the total number of registers in Table~\ref{tab:hw_syn} also account for the registers needed by the delay line called Main Street~\cite{Berlekamp94}.
\begin{table*}[htbp]
	\centering
	\caption{Decoder Architecture Based on Syndrome-Based Decoding (CPD is $T_{mult} + T_{add} + T_{mux}$)}
	\label{tab:hw_syn}
	\begin{tabular}{|c|c|c|c|c|c||c||c|}
		\hline
		& Multipliers & Adders & Inverters & Registers & Muxes & Latency & $\text{Throughput}^{-1}$\\
		\hline
		Syndrome Computation & $2t$ & $2t$ & 0 & $10t$ & $2t$ & $n+6t$ & $6t$\\
		\hline
		Key Equation Solver & $2t+1$ & $2t+1$ & 0 & $10t+5$ & $14t+7$ & $12t$ & $12t$\\
		\hline
		Correction & $3t+3$ & $3t+1$ & 1 & $12t+10$ & $3t+1$ & $3t$ & $3t$\\
		\hline
		Total & $7t+4$ & $7t+2$ & 1 & $n+53t+15$ & $19t+8$ & $n+21t$ & $12t$\\
		\hline
	\end{tabular}
\end{table*}

Both the PST and the ST can be adapted to implement decoder architectures for syndromeless decoding algorithms.
Similar to syndrome computation, interpolation in syndromeless decoders can be implemented by Horner's rule, and thus the PST can be easily adapted to implement this step.
For the architectures based on syndromeless decoding, the PST contains $n$ cells, and the hardware costs of each cell remain the same.
The partial GCD is implemented by the ST.
The ST can implement the polynomial division in message recovery as well.
In Step~\ref{alg:gao}.3, the maximum polynomial degree of the polynomial division is $k+t$ and the iteration number is at most $k$.
As mentioned in Section~\ref{sec:time_ana}, the degree of $q(x)$ in Step~\ref{alg:gao_mod}.3 ranges from $1$ to $t$.
In the polynomial division $\frac{g_0(x)}{v(x)}$, the maximum polynomial degree is $n$ and the iteration number is at most $n-1$.
Given the maximum polynomial degree and iteration number, the hardware costs and latency for the ST can be determined as for the syndrome-based architecture.

The other operations of syndromeless decoders do not have corresponding functional units available in the hypersystolic architecture, and we choose to implement them in a straightforward way.
In the polynomial multiplication $q(x)u(x)$, $u(x)$ has degree at most $t-1$ and the product has degree at most $n-1$.
Thus it can be done by $n$ multiply-and-accumulate circuits, $n$ registers in $t$ cycles (see, e.g.,~\cite{Park05}).
The polynomial addition in Step~\ref{alg:gao_mod}.3 can be done in one clock cycle with $n$ adders and $n$ registers.
To remove the scaling factor, Step~\ref{alg:gao}.3 is implemented in four cycles with at most one inverter, $k+2$ multipliers, and $k+3$ registers; Step~\ref{alg:gao_mod}.3 is implemented in three cycles with at most one inverter, $n+1$ multipliers, and $n+2$ registers.
We summarize the hardware costs, latency, and throughput of the decoder architectures based on Algorithms~\ref{alg:gao}~and~\ref{alg:gao_mod} in Table~\ref{tab:hw_gao}.
\begin{table*}[htbp]
	\centering
	\caption{Decoder Architectures Based on Syndromeless Decoding (CPD is $T_{mult} + T_{add} + T_{mux}$)}
	\label{tab:hw_gao}
	\small\addtolength{\tabcolsep}{-3pt}
	\scalebox{0.8}{
	\begin{tabular}{|c|c|c|c|c|c|c||c||c|}
		\hline
		\multicolumn{2}{|c|}{} & Multipliers & Adders & Inverters & Registers & Muxes & Latency & $\text{Throughput}^{-1}$\\
		\hline
		\multicolumn{2}{|c|}{Interpolation} & $n$ & $n$ & 0 & $5n$ & $n$ & $4n$ & $3n$\\
		\hline
		\multirow{2}{*}{Partial GCD} & Alg.~\ref{alg:gao} & $n+1$ & $n+1$ & 0 & $5n+5$ & $7n+7$ & $12t$ & $12t$\\
		\cline{2-9}
		 & Alg.~\ref{alg:gao_mod} & $2t+1$ & $2t+1$ & 0 & $10t+5$ & $14t+7$ & $12t$ & $12t$\\
		\hline
		Message & Alg.~\ref{alg:gao} & $2k+t+3$ & $k+t+1$ & 1 & $6k+5t+8$ & $7k+7t+7$ & $6k+4$ & $6k$\\
		\cline{2-9}
		 Recovery & Alg.~\ref{alg:gao_mod} & $3n+2$ & $3n+1$ & 1 & $7n+7$ & $7n+7$ & $6n+t-2$ & $6n$\\
		\hline
		\multirow{2}{*}{Total} & Alg.~\ref{alg:gao} & $2n+2k+t+4$ & $2n+k+t+2$ & 1 & $10n+6k+5t+13$ & $8n+7k+7t+14$ & $4n+6k+12t+4$ & $6k$\\
		\cline{2-9}
		 & Alg.~\ref{alg:gao_mod} & $4n+2t+3$ & $4n+2t+2$ & 1 & $12n+10t+12$ & $8n+14t+14$ & $10n+13t-2$ & $6n$\\
		\hline
	\end{tabular}
	}
\end{table*}

Now we compare the hardware costs of the three decoder architectures based on Tables~\ref{tab:hw_syn}~and~\ref{tab:hw_gao}.
The hardware costs are measured by the numbers of various basic circuit elements.
All three decoder architectures need only one inverter.
The syndrome-based decoder architecture requires fewer multiplexers than the decoder architecture based on Algorithm~\ref{alg:gao}, regardless of the rate, and fewer multipliers, adders, and registers when $R > \frac{1}{2}$.
The syndrome-based decoder architecture requires fewer registers than the decoder architecture based on Algorithm~\ref{alg:gao_mod} when $R > \frac{21}{43}$, and fewer multipliers, adders, and multiplexers regardless of the rate. 
Thus for high rate codes, the syndrome-based decoder has lower hardware costs than syndromeless decoders.
The decoder architecture based on Algorithm~\ref{alg:gao} requires fewer multipliers and adders than that based on Algorithm~\ref{alg:gao_mod}, regardless of the rate, but more registers and multiplexers when $R > \frac{9}{17}$.

In these algorithms, each step starts with the results of the previous step.
Due to this data dependency, their corresponding functional units have to operate in a pipelined fashion.
Thus the decoding latency is simply the sum of the latency of all the functional units.
The decoder architecture based on Algorithm~\ref{alg:gao_mod} has the longest latency, regardless of the rate.
The syndrome-based decoder architecture has shorter latency than the decoder architecture based on Algorithm~\ref{alg:gao} when $R>\frac{1}{7}$.

All three decoders have the same CPD, so the throughput is determined by the number of clock cycles.
Since the functional units in each decoder architecture are pipelined, the throughput of each decoder architecture is determined by the functional unit that requires the largest number of cycles.
Regardless of the rate, the decoder based on Algorithm~\ref{alg:gao_mod} has the lowest throughput.
When $R > \frac{1}{2}$, the syndrome-based decoder architecture has higher throughput than the decoder architecture based on Algorithm~\ref{alg:gao}.
When the rate is lower, they have the same throughput.

Hence for high rate RS codes, the syndrome-based decoder architecture requires less hardware and achieves higher throughput and shorter latency than those based on syndromeless decoding algorithms.

\section{Fast Implementation of Syndromeless Decoding}\label{sec:tran}
In this section, we implement the three steps of Algorithms~\ref{alg:gao}~and~\ref{alg:gao_mod}---interpolation, partial GCD, and message recovery---by fast algorithms described in Section~\ref{sec:rev} and evaluate their complexities.
Since both the polynomial division by Newton iteration and the FEEA depend on efficient polynomial multiplication, the decoding complexity relies on the complexity of polynomial multiplication.
Thus, in addition to field multiplications and additions, the complexities in this section are also expressed in terms of polynomial multiplications.

\subsection{Polynomial Multiplication}\label{sec:poly_mul}
We first derive a tighter bound on the complexity of the fast polynomial multiplication based on Cantor's approach.

Let the degree of the product of two polynomials be less than $n$.
The polynomial multiplication can be done by two FFTs and one inverse FFT if length-$n$ FFT is available over $\GF(2^m)$, which requires $n \mid 2^m-1$.
If $n \nmid 2^m-1$, one option is to pad the polynomials to length $n'$ ($n'> n$) with $n' \mid 2^m-1$.
Compared with fast polynomial multiplication based on multiplicative FFT, Cantor's approach uses additive FFT and does not require $n \mid 2^m-1$, so it is more efficient than FFT multiplication with padding for most degrees.
For $n=2^m-1$, their complexities are similar.
Although asymptotically worse than Sch\"onhage's algorithm~\cite{Gathen03}, which has $O(n\log n \log\log n)$ complexity, Cantor's approach has small implicit constants and hence it is more suitable for practical implementation of RS codes~\cite{Gao03,Gathen96a}.
Gao claimed an improvement on Cantor's approach in~\cite{Gao03}, but we do not pursue this due to lack of details.

A tighter bound on the complexity of Cantor's approach is given in Theorem~\ref{trm:poly_mul}.
Here we make the same assumption as in~\cite{Gathen96a} that the auxiliary polynomials $s_i$ and the values $s_i(\beta_j)$ are precomputed.
The complexity of pre-computation was given in~\cite{Gathen96a}.
\begin{theorem}\label{trm:poly_mul}
	By Cantor's approach, two polynomials $a,b \in \GF(2^m)[x]$ whose product has degree less than $h$ ($1 \le h \le 2^m$) can be multiplied using less than $\frac{3}{2}h\log^2h+\frac{7}{2}h\log h-2h+\log h+2$ multiplications, $\frac{3}{2}h\log^2h+\frac{21}{2}h\log h-13h+\log h+15$ additions, and $2h$ inversions over $\GF(2^m)$.
\end{theorem}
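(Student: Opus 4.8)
The plan is to re-derive, with exact operation counts, the costs of the two subroutines underlying Cantor's multiplication---the multipoint evaluation (MPE) and multipoint interpolation (MPI) algorithms of~\cite{Gathen96a}---rather than simply invoke the bounds already stated for them. We take $h=2^{k}$, the FFT length used by Cantor's method (an arbitrary product-degree bound is first rounded up to such a value), with the cases $k\le 1$ immediate. Fix a Cantor-type basis $\beta_{1},\beta_{2},\dots$ of $\GF(2^{m})$, set $W_{i}=\langle\beta_{1},\dots,\beta_{i}\rangle$, and let $s_{i}(x)=\prod_{w\in W_{i}}(x-w)$ be the associated subspace polynomial. The structural fact I would establish first is that each $s_{i}$ is monic and $\GF(2)$-linearized---a combination of $x,x^{2},x^{4},\dots$---involving only the exponents $1,2,\dots,2^{i}$, hence with at most $i+1$ nonzero coefficients; this follows by a short induction from the identity $s_{i}=s_{i-1}^{2}+s_{i-1}(\beta_{i})\,s_{i-1}$, or may be quoted from~\cite{Cantor89,Gathen96a}.

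Next I would write the recursions explicitly. Evaluating a polynomial of degree below $2^{k}$ on a coset of $W_{k}$ splits into two evaluations on cosets of $W_{k-1}$ after reducing that polynomial modulo $s_{k-1}$ and modulo $s_{k-1}+c$ for a suitable constant $c$; because $s_{k-1}$ is monic with at most $k$ nonzero coefficients, each such reduction eliminates the $2^{k-1}$ leading coefficients of a degree-below-$2^{k}$ polynomial, and each elimination costs at most $k-1$ multiplications and $k-1$ additions, with one extra addition per elimination in the second reduction to absorb the constant $c$. This gives $E(k)\le 2E(k-1)+\alpha k\,2^{k}+(\text{lower-order terms})$ for both the multiplicative and the additive cost of MPE, and a matching recursion $I(k)\le 2I(k-1)+\cdots$ for MPI, whose combine step is the transposed operation---multiplication of a polynomial of degree below $2^{i-1}$ by the sparse $s_{i-1}$, together with the normalization that produces the inversions. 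Unrolling, $E(k)\le\sum_{i=1}^{k}2^{k-i}(\alpha i\,2^{i}+\cdots)=\alpha\,2^{k}\sum_{i=1}^{k}i+\cdots$, and similarly for $I(k)$; the identity $\sum_{i=1}^{k}i=\tfrac{k(k+1)}{2}$ is what promotes the $\Theta(k\,2^{k})$ per-level merge cost to the $\Theta(h\log^{2}h)$ total. The full multiplication then costs $2E_{\mathrm{mult}}(k)+2^{k}+I_{\mathrm{mult}}(k)$ multiplications---the $2^{k}$ term being the pointwise products---and $2E_{\mathrm{add}}(k)+I_{\mathrm{add}}(k)$ additions up to a handful of bookkeeping additions, while the inversions are those of the MPI normalization, a bounded number per recursive node, so that the geometric sum over the nodes stays below $2h$. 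Calibrating $\alpha$ and the lower-order constants so that the leading parts of $2E(k)$, the pointwise products, and $I(k)$ combine to $\tfrac32 h\log^{2}h$ then produces the three stated expressions.

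Since the improvement over the bound $\tfrac32 h\log^{2}h+\tfrac{15}{2}h\log h+8h$ multiplications (and $\tfrac32 h\log^{2}h+\tfrac{29}{2}h\log h+4h+9$ additions) of~\cite{Gathen96a} lies entirely in the sub-leading terms, the main obstacle is the exact accounting of the merge cost at every level, down to the coefficient of $i\,2^{i}$ and its additive correction. There one must use that the pure reduction modulo the monic $s_{i}$ requires no division by a leading coefficient and no arithmetic at the already-zero positions of the sparse $s_{i}$; isolate the cost genuinely attributable to the offset $c$, which only ever touches the constant term; treat the MPI combine to the same precision via the transposition correspondence with the MPE reduction; and, in passing from the recurrence to closed form, avoid double-counting the topmost and bottommost levels, where the operand and modulus degrees degenerate. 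The boundary instances are then checked directly.
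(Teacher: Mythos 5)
Your skeleton---recurse on the MPE/MPI merge costs, exploit the sparsity of the linearized $s_i$ (at most $i+1$ nonzero coefficients), unroll via $\sum_{i=1}^{k} i = k(k+1)/2$, and assemble the product as two evaluations, $2^k$ pointwise products, and one interpolation---matches the paper's. But there is a concrete gap at your very first step: you round the degree bound $h$ up to a power of two $2^k$ and carry out the entire analysis in terms of $2^k$. The theorem's content lies entirely in the sub-leading terms ($\tfrac{7}{2}h\log h-2h$ multiplications versus $\tfrac{15}{2}h\log h+8h$ in~\cite{Gathen96a}), and the rounding destroys exactly those. Since $2^{p-1}<h\le 2^p$, a bound expressed in $2^p$ and $p$ must be converted back via $2^p\le 2h$ and $p\le\log h+1$; carrying out your own unrolled recurrence (merge cost about $(i+1)2^{i-1}$ per level, three transform calls plus $2^k$ pointwise products) and making that substitution yields roughly $\tfrac{3}{2}h\log^2h+\tfrac{15}{2}h\log h+O(h)$---i.e., it reproduces the bound of~\cite{Gathen96a}, not the improvement. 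No ``calibration of lower-order constants'' can extract the stated $-2h$ and $-13h$ terms from a bound that is a function of $2^k$ alone.

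The paper gets the improvement by refusing to round: it keeps $h$ and $2^p$ distinct and analyzes the \emph{top level} of each recursion with the true degree bound. At level $p$ of MPE the input has degree less than $h$, so the quotient by $s_{p-1}$ (degree $2^{p-1}$, at most $p$ nonzero coefficients) has degree less than $h-2^{p-1}$, and the top-level division plus the update $r_1=r_0+s_{p-1}(\beta_p)g$ cost only $(p+1)(h-2^{p-1})$ multiplications and additions rather than $(p+1)2^{p-1}$; the recursive calls are then bounded by the generic power-of-two bounds $M_E(2^{p-1})$, $S_E(p-1)$, and only the residual $2^p$'s are replaced by $2h$ at the very end. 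The saving is largest precisely when $h$ is just above $2^{p-1}$, which is exactly where your rounding is most lossy. Your remark about ``avoiding double-counting the topmost level'' gestures at this but does not identify the mechanism. Two smaller leaks in the same direction: you charge $2E_{\mathrm{mult}}(k)$ for the two operand evaluations, whereas the paper charges $M_E(h_1)+M_E(h_2)$ with $h_1+h_2-1=h$, and you charge $k-1$ operations per eliminated coefficient where the count should be $p$ plus one for the coset-offset update. To repair the proof you must restore the $h$-versus-$2^p$ distinction at the top level of both MPE and MPI before unrolling.
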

\begin{proof}
	There exists $0 \le p \le m$ satisfying $2^{p-1} < h \le 2^p$.
	Since both the MPE and MPI algorithms are recursive, we denote the numbers of additions of the MPE and MPI algorithms for input $i$ ($0\leq i \leq p)$ as $S_E(i)$ and $S_I(i)$, respectively.
	Clearly $S_E(0)=S_I(0)=0$.
	Following the approach in~\cite{Gathen96a}, it can be shown that for $1 \leq i \leq p$,
	\begin{align}
		S_E(i) & \le  i(i+3)2^{i-2} + (p-3)(2^i-1) + i, \label{eqn:se}\\
		S_I(i) & \le  i(i+5)2^{i-2} + (p-3)(2^i-1) + i. \label{eqn:si}
	\end{align}

Let $M_E(h)$ and $A_E(h)$ denote the numbers of multiplications and additions, respectively, that the MPE algorithm requires for polynomials of degree less than $h$.
When $i=p$ in the MPE algorithm, $f(x)$ has degree less than $h \le 2^p$, while $s_{p-1}$ is of degree $2^{p-1}$ and has at most $p$ non-zero coefficients.
Thus $g(x)$ has degree less than $h -2^{p-1}$.
Therefore the numbers of multiplications and additions for the polynomial division in~\cite[Step~2 of Algorithm~3.1]{Gathen96a} are both $p(h-2^{p-1})$, while $r_1(x) = r_0(x) + s_{i-1}(\beta_i)g(x)$ needs at most $h-2^{p-1}$ multiplications and the same number of additions.
Substituting the bound on $M_E(2^{p-1})$ in~\cite{Gathen96a}, we obtain $M_E(h) \leq 2M_E(2^{p-1}) + p(h-2^{p-1}) +h-2^{p-1}$, and thus $M_E(h)$ is at most $\frac{1}{4}p^22^p-\frac{1}{4}p2^p-2^p+(p+1)h$. Similarly, substituting the bound on $S_E(p-1)$ in Eq.~\eqref{eqn:se}, we obtain $A_E(h) \leq 2S_E(p-1) + p(h-2^{p-1}) +h-2^{p-1}$, and hence $A_E(h)$ is at most $\frac{1}{4}p^22^p+\frac{3}{4}p2^p-4 \cdot 2^p+(p+1)h+4$.

Let $M_I(h)$ and $A_I(h)$ denote the numbers of multiplications and additions, respectively, the MPI algorithm requires when the interpolated polynomial has degree less than $h$.
When $i=p$ in the MPI algorithm, $f(x)$ has degree less than $h \le 2^p$.
It implies that $r_0(x)+r_1(x)$ has degree less than $h-2^{p-1}$.
Thus it requires at most $h-2^{p-1}$ additions to obtain $r_0(x)+r_1(x)$ and $h-2^{p-1}$ multiplications for $s_{i-1}(\beta_i)^{-1}\bigl(r_0(x)+r_1(x)\bigr)$.
The numbers of multiplications and additions for the polynomial multiplication in~\cite[Step~3 of Algorithm~3.2]{Gathen96a} to obtain $f(x)$ are both $p(h-2^{p-1})$.
Adding $r_0(x)$ also needs $2^{p-1}$ additions.
Substituting the bound on $M_I(2^{p-1})$ in~\cite{Gathen96a}, we have $M_I(h) \le 2M_I(2^{p-1}) + p(h-2^{p-1})+h-2^{p-1}$, and hence $M_I(h)$ is at most $\frac{1}{4}p^22^p - \frac{1}{4}p2^p -2^p+(p+1)h$. Similarly, substituting the bound on $S_I(p-1)$ in Eq.~\eqref{eqn:si}, we have $A_I(h) \le  2S_I(p-1) + p(h-2^{p-1})+h+1$, and hence $A_E(h)$ is at most $\frac{1}{4}p^22^p+\frac{5}{4}p2^p-4\cdot2^p+(p+1)h+5$.
The interpolation step also needs $2^p$ inversions.

Let $\mathsf{M}(h_1,h_2)$ be the complexity of multiplication of two polynomials of degrees less than $h_1$ and $h_2$.
Using Cantor's approach, $\mathsf{M}(h_1,h_2)$ includes $M_E(h_1)+M_E(h_2)+M_I(h)+2^p$ multiplications, $A_E(h_1)+A_E(h_2)+A_I(h)$ additions, and $2^p$ inversions, when $h=h_1+h_2-1$.
Finally, we replace $2^p$ by $2h$ as in~\cite{Gathen96a}.
\end{proof}

Compared with the results in~\cite{Gathen96a}, our results have the same highest degree term but smaller terms for lower degrees.

By Theorem~\ref{trm:poly_mul}, we can easily compute $\mathsf{M}(h_1) \triangleq \mathsf{M}(h_1,h_1)$.
A by-product of the above proof is the bounds for the MPE and MPI algorithms.
We also observe some properties for the complexity of fast polynomial multiplication that hold for not only Cantor's approach but also other approaches.
These properties will be used in our complexity analysis next.
Since all fast polynomial multiplication algorithms have higher-than-linear complexities, $2\mathsf{M}(h) \le \mathsf{M}(2h)$.
Also note that $\mathsf{M}(h+1)$ is no more than $\mathsf{M}(h)$ plus $2h$ multiplications and $2h$ additions~\cite[Exercise 8.34]{Gathen03}.
Since the complexity bound is determined only by the degree of the product polynomial, we assume $\mathsf{M}(h_1,h_2)\le\mathsf{M}(\lceil\frac{h_1+h_2}{2}\rceil)$.
We note that the complexities of Sch\"onhage's algorithm as well as Sch\"onhage and Strassen's algorithm, both based on multiplicative FFT, are also determined by the degree of the product polynomial~\cite{Gathen03}.

\subsection{Polynomial Division}\label{sec:poly_div}
Similar to~\cite[Exercise 9.6]{Gathen03}, in characteristic-2 fields, the complexity of Newton iteration is at most $$\sum_{0\le j \le r-1}\bigl(\mathsf{M}(\lceil (d_0+1) 2^{-j} \rceil) +\mathsf{M}(\lceil (d_0+1) 2^{-j-1} \rceil)\bigr),$$ where $r = \lceil \log(d_0 + 1)\rceil$.
Since $\lceil (d_0 + 1)2^{-j} \rceil \le \lfloor (d_0 +1)2^{-j} \rfloor +1$ and $\mathsf{M}(h+1)$ is no more than $\mathsf{M}(h)$, plus $2h$ multiplications and $2h$ additions~\cite[Exercise 8.34]{Gathen03}, it requires at most $\sum_{1\le j \le r}\bigl(\mathsf{M}(\lfloor (d_0+1) 2^{-j} \rfloor) + \mathsf{M}(\lfloor (d_0+1) 2^{-j-1} \rfloor)\bigr)$, plus $\sum_{0 \le j \le r-1}(2\lfloor (d_0+1) 2^{-j} \rfloor + 2\lfloor (d_0+1) 2^{-j-1} \rfloor)$ multiplications and the same
number of additions.
Since $2\mathsf{M}(h) \le \mathsf{M}(2h)$, Newton iteration costs at most $\sum_{0\le j \le r-1}\bigl(\frac{3}{2}\mathsf{M}(\lfloor (d_0+1) 2^{-j} \rfloor)\bigr) \le 3\mathsf{M}(d_0+1)$, $6(d_0+1)$ multiplications, and $6(d_0+1)$ additions.
The second step to compute the quotient needs $\mathsf{M}(d_0+1)$ and the last step to compute the remainder needs $\mathsf{M}(d_1+1,d_0+1)$ and $d_1+1$ additions.
By $\mathsf{M}(d_1+1,d_0+1) \le \mathsf{M}(\lceil\frac{d_0+d_1}{2}\rceil+1)$, the total cost is at most $4\mathsf{M}(d_0)+\mathsf{M}(\lceil\frac{d_0+d_1}{2}\rceil)$, $15d_0+d_1+7$ multiplications, and $11d_0+2d_1+8$ additions.
Note that this bound does not require $d_1\ge d_0$ as in~\cite{Gathen03}.

\subsection{Partial GCD}\label{sec:feea}
The partial GCD step can be implemented in three approaches: the ST, the classical EEA with fast polynomial multiplication and Newton iteration, and the FEEA with fast polynomial multiplication and Newton iteration.
The ST is essentially the classical EEA.
The complexity of the classical EEA is asymptotically worse than that of the FEEA.
Since the FEEA is more suitable for long codes, we will use the FEEA in our complexity analysis of fast implementations.

In order to derive a tighter bound on the complexity of the FEEA, we first present a modified FEEA in Algorithm~\ref{alg:feea}.
Let $\eta(h) \triangleq \max\{j:\sum_{i=1}^j \deg q_i \le h\}$, which is the number of steps of the EEA satisfying $\deg r_0 - \deg r_{\eta(h)} \le h < \deg r_0 - \deg r_{\eta(h)+1}$.
For $f(x) = f_n x^n + \cdots + f_1 x + f_0$ with $f_n \ne 0$, the truncated polynomial $f(x) \upharpoonright h \triangleq f_n x^h + \cdots + f_{n-h+1} x + f_{n-h}$ where $f_i=0$ for $i<0$.
Note that $f(x) \upharpoonright h = 0$ if $h < 0$.
\begin{algorithm}
	Modified Fast Extended Euclidean Algorithm
	\label{alg:feea}

	\KwIn{two \emph{monic} polynomials $r_0$ and $r_1$, with $\deg r_0 = n_0 > n_1 = \deg r_1$, as well as integer $h$ $(0 < h \le n_0)$}
	\KwOut{$l = \eta(h), \rho_{l+1}, R_l, r_l$, and $\tilde{r}_{l+1}$}
	\renewcommand{\labelenumi}{\ref{alg:feea}.\theenumi}
	\begin{enumerate}
		\item If $r_1 = 0$ or $h < n_0 - n_1$, then return $0, 1, \bigl[\begin{smallmatrix} 1 & 0\\ 0 & 1 \end{smallmatrix}\bigr], r_0$, and $r_1$.
		\item $h_1 = \lfloor \frac{h}{2} \rfloor, r_0^* = r_0 \upharpoonright 2h_1, r_1^* = r_1 \upharpoonright \bigl(2h_1 - (n_0 - n_1)\bigr)$.
		\item $(j - 1, \rho_j^*, R_{j-1}^*, r^*_{j-1}, \tilde{r}^*_j) = \mathrm{FEEA}(r_0^*, r_1^*, h_1)$.
		\item $\bigl[\begin{smallmatrix}r_{j-1}\\\tilde{r}_j \end{smallmatrix}\bigr] = R^*_{j-1}\bigl[\begin{smallmatrix}r_0-r^*_0x^{n_0-2h_1}\\ r_1-r_1^*x^{n_0-2h_1}\end{smallmatrix}\bigr] + \bigl[\begin{smallmatrix}r^*_{j-1}x^{n_0-2h_1}\\ \tilde{r}^*_j x^{n_0-2h_1}\end{smallmatrix}\bigr]$,
				$R_{j-1} = \bigl[\begin{smallmatrix} 1 & 0\\ 0 & \frac{1}{\mathrm{lc}(\tilde{r}_j)}\end{smallmatrix}\bigr]R_{j-1}^*$,
				$\rho_j = \rho_j^* \mathrm{lc}(\tilde{r}_j)$, $r_j = \frac{\tilde{r}_j}{\mathrm{lc}(\tilde{r}_j)}$, $n_j = \deg r_j$.
		\item If $r_j = 0$ or $h < n_0 - n_j$, then
			return $j - 1, \rho_j, R_{j-1}, r_{j-1}$, and $\tilde{r}_j$.
		\item Perform polynomial division with remainder as $r_{j-1} = q_j r_j + \tilde{r}_{j+1}$, $\rho_{j+1} = \mathrm{lc}(\tilde{r}_{j+1}),
			r_{j+1} = \frac{\tilde{r}_{j+1}}{\rho_{j+1}}, n_{j+1} = \deg r_{j+1},
			R_j = \bigl[\begin{smallmatrix} 0 & 1\\ \frac{1}{\rho_{j+1}} & -\frac{q_j}{\rho_{j+1}} \end{smallmatrix}\bigr] R_{j-1}$.
		\item $h_2 = h - (n_0 - n_j),
			r_j^* = r_j \upharpoonright 2h_2, r_{j+1}^* = r_{j+1} \upharpoonright \bigl(2h_2 - (n_j - n_{j+1})\bigr)$.
		\item $(l-j, \rho_{l+1}^*, S^*, r^*_{l-j},\tilde{r}^*_{l-j+1}) = \mathrm{FEEA}(r_j^*, r_{j+1}^*, h_2)$.
		\item $\bigl[\begin{smallmatrix} r_l\\ \tilde{r}_{l+1}\end{smallmatrix}\bigr] =
			S^*\bigl[\begin{smallmatrix}r_j-r_j^* x^{n_j-2h_2}\\ r_{j+1}-r_{j+1}^*x^{n_j-2h_2}\end{smallmatrix}\bigr] + \bigl[\begin{smallmatrix}r_{l-j}^*x^{n_j-2h_2}\\ \tilde{r}_{l-j+1}^*x^{n_j-2h_2}\end{smallmatrix}\bigr], S = \bigl[\begin{smallmatrix} 1 & 0\\ 0 & \frac{1}{\mathrm{lc}(\tilde{r}_{l+1})}\end{smallmatrix}\bigr] S^*,
					\rho_{l+1}=\rho_{l+1}^*\mathrm{lc}(\tilde{r}_{l+1})$.
				\item Return $l, \rho_{l+1}, SR_j, r_l, \tilde{r}_{l+1}$.
	\end{enumerate}
\end{algorithm}
It is easy to verify that Algorithm~\ref{alg:feea} is equivalent to the FEEA in~\cite{Gathen03,Khodadad06}.
The difference between Algorithm~\ref{alg:feea} and the FEEA in~\cite{Gathen03,Khodadad06} lies in Steps~\ref{alg:feea}.4,~\ref{alg:feea}.5,~\ref{alg:feea}.8,~and~\ref{alg:feea}.10: in Steps~\ref{alg:feea}.5~and~\ref{alg:feea}.10, two additional polynomials are returned, and they are used in the updates of Steps~\ref{alg:feea}.4~and~\ref{alg:feea}.8 to reduce complexity.
The modification in Step~\ref{alg:feea}.4 was suggested in~\cite{Khodadad05} and the modification in Step~\ref{alg:feea}.9 follows the same idea.

In~\cite{Gathen03,Khodadad05}, the complexity bounds of the FEEA are established assuming $n_0 \le 2h$.
Thus we first establish a bound of the FEEA for the case $n_0 \le 2h$ below in Theorem~\ref{trm:feea}, using the bounds we develop in Sections~\ref{sec:poly_mul}~and~\ref{sec:poly_div}.
The proof is similar to those in~\cite{Gathen03,Khodadad05} and hence omitted; interested readers should have no difficulty filling in the details.

\begin{theorem}\label{trm:feea}
	Let $T(n_0, h)$ denote the complexity of the FEEA.
	When $n_0 \le 2h$, $T(n_0, h)$ is at most $17\mathsf{M}(h)\log h$ plus $(48h+2)\log h$ multiplications, $(51h+2)\log h$ additions, and $3h$ inversions.
	Furthermore, if the degree sequence is normal, $T(2h,h)$ is at most $10\mathsf{M}(h)\log h$, $(\frac{55}{2}h+6)\log h$ multiplications, and $(\frac{69}{2}h+3)\log h$ additions.
\end{theorem}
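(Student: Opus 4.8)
The plan is to prove Theorem~\ref{trm:feea} by induction on $h$, mirroring the arguments of~\cite{Gathen03,Khodadad05} for the modified recursion of Algorithm~\ref{alg:feea} but carrying explicit constants rather than $O(\cdot)$ terms. The base case is the early return in Step~\ref{alg:feea}.1 (or Step~\ref{alg:feea}.5), which costs nothing. For the inductive step, a single call makes two recursive calls, with parameters $h_1=\lfloor h/2\rfloor$ in Step~\ref{alg:feea}.3 and $h_2=h-(n_0-n_j)\le\lceil h/2\rceil$ in Step~\ref{alg:feea}.8, plus the non-recursive work of Steps~\ref{alg:feea}.4,~\ref{alg:feea}.6,~\ref{alg:feea}.9,~and~\ref{alg:feea}.10. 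Writing $T(n_0,h)$ for the total cost, I would first set up the recurrence $T(n_0,h)\le T(\,\cdot\,,h_1)+T(\,\cdot\,,h_2)+W(h)$, where $W(h)$ collects the cost of the two matrix--vector updates, the single Euclidean division, and the $2\times 2$ matrix products, all on polynomials of degree $O(h)$, so that $W(h)=c\,\mathsf{M}(h)$ plus $O(h)$ field operations for an explicit constant $c$.

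The first real step is to pin down $W(h)$ with exact constants. For the division with remainder in Step~\ref{alg:feea}.6 I would invoke the refined bound of Section~\ref{sec:poly_div}, $4\mathsf{M}(d_0)+\mathsf{M}(\lceil\frac{d_0+d_1}{2}\rceil)$ plus the stated linear terms, with $d_0,d_1\le h$. The matrix--vector updates in Steps~\ref{alg:feea}.4~and~\ref{alg:feea}.9 are where the Khodadad-style modification matters: because the truncated-and-shifted pieces $r^*_{j-1}x^{\,n_0-2h_1}$, $\tilde r^*_j x^{\,n_0-2h_1}$ (and the analogues in Step~\ref{alg:feea}.9) are added back separately, the $2\times 2$ matrix returned by the recursive call is applied only to the low-order remainders $r_0-r_0^*x^{\,n_0-2h_1}$, $r_1-r_1^*x^{\,n_0-2h_1}$, whose degrees are at most $n_0-2h_1=O(h)$; each such update then costs a fixed number of products $\mathsf{M}(h)$ (after normalizing arguments via $\mathsf{M}(h_1,h_2)\le\mathsf{M}(\lceil\frac{h_1+h_2}{2}\rceil)$ and $\mathsf{M}(h{+}1)\le\mathsf{M}(h)$ plus $2h$ mult/add) plus $O(h)$ additions. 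The remaining bookkeeping---forming $R_{j-1}$, $R_j$, and $SR_j$ in Steps~\ref{alg:feea}.4,~\ref{alg:feea}.6,~and~\ref{alg:feea}.10---adds a constant number of further degree-$O(h)$ polynomial multiplications plus the leading-coefficient inversions, which accumulate into the $3h$ inversions. Summing these gives $W(h)$ in closed form.

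With $W(h)$ in hand I would solve the recurrence by induction with an ansatz of the form $T(n_0,h)\le C\,\mathsf{M}(h)\log h+(Dh+E)\log h$ for the multiplications and similarly for the additions. Using $2\mathsf{M}(\lceil h/2\rceil)\le\mathsf{M}(h)$ and, more generally, $\mathsf{M}(h_1)+\mathsf{M}(h_2)\le\mathsf{M}(h)$ whenever $h_1+h_2\le h$, the $\mathsf{M}$-work at each of the at most $\log h$ recursion levels is bounded by $c\,\mathsf{M}(h)$, which produces the $17\mathsf{M}(h)\log h$ term; the linear and constant parts of $W$ telescope through the geometric sum $\sum_i 2^i(h/2^i)$ and yield the $(48h+2)\log h$ multiplications and $(51h+2)\log h$ additions once the floor/ceiling slack is absorbed. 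For the normal-degree-sequence case $n_0=2h$ I would rerun the same induction with a tighter $W(h)$: every quotient $q_i$ has degree $1$, so the division in Step~\ref{alg:feea}.6 and the matrix product $R_j=\bigl[\begin{smallmatrix}0&1\\ 1/\rho_{j+1}&-q_j/\rho_{j+1}\end{smallmatrix}\bigr]R_{j-1}$ cost only $O(h)$, the truncation parameters halve $h$ exactly, and the matrix entries stay balanced at degree $\approx h/2$; tracking these reduced costs through the recursion lowers the leading constant to $10$ and gives the $(\frac{55}{2}h+6)\log h$ and $(\frac{69}{2}h+3)\log h$ terms.

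I expect the main obstacle to be the explicit-constant bookkeeping rather than anything conceptual: one must keep the linear-in-$h$ and constant terms of $W(h)$ exact, choose the induction hypothesis with the right $\log h$-scaled linear term so the recursion closes, and control the discrepancies introduced by the $h_1=\lfloor h/2\rfloor$ and $h_2=h-(n_0-n_j)$ truncations and by replacing $\mathsf{M}(h_1,h_2)$ with $\mathsf{M}(\lceil\frac{h_1+h_2}{2}\rceil)$. This is exactly where the analysis must be sharper than the $O(\cdot)$-level proofs of~\cite{Gathen03,Khodadad05,Khodadad06}, and it is why the tightened polynomial-multiplication bound of Theorem~\ref{trm:poly_mul} and the division bound of Section~\ref{sec:poly_div} are needed as inputs.
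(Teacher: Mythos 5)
Your plan matches the paper's (omitted) proof in approach: the paper explicitly skips the derivation, stating only that it follows the analyses of von zur Gathen--Gerhard and Khodadad with the refined multiplication/division bounds of Sections~\ref{sec:poly_mul}~and~\ref{sec:poly_div} and the savings from the modified update steps of Algorithm~\ref{alg:feea}, which is exactly the recurrence-plus-superadditivity argument you describe. The only caveat is that, like the paper, you stop short of the explicit constant bookkeeping that would actually verify $17$, $10$, $48h+2$, etc., but the structure and ingredients are the right ones.
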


Compared with the complexity bounds in~\cite{Gathen03,Khodadad05}, our bound not only is tighter, but also specifies all terms of the complexity and avoid the big $O$ notation.
The saving over~\cite{Khodadad05} is due to lower complexities of Steps~\ref{alg:feea}.6,~\ref{alg:feea}.9,~and~\ref{alg:feea}.10 as explained above.

The saving for the normal case over~\cite{Gathen03} is due to lower complexity of Step~\ref{alg:feea}.9.

Applying the FEEA to $g_0(x)$ and $g_1(x)$ to find $v(x)$ and $g(x)$ in Algorithm~\ref{alg:gao}, we have $n_0 = n$ and $h \le t$ since $\deg v(x) \le t$.
For RS codes, we always have $n > 2t$.
Thus, the condition $n_0 \le 2h$ for the complexity bound in~\cite{Gathen03,Khodadad05} is not valid.
It was pointed out in~\cite{Gathen03,Gao03} that $s_0(x)$ and $s_1(x)$ as defined in Algorithm~\ref{alg:gao_mod} can be used instead of $g_0(x)$ and $g_1(x)$, which is the difference between Algorithms~\ref{alg:gao}~and~\ref{alg:gao_mod}.
Although such a transform allows us to use the results in~\cite{Gathen03,Khodadad05}, it introduces extra cost for message recovery~\cite{Gao03}.
To compare the complexities of Algorithms~\ref{alg:gao}~and~\ref{alg:gao_mod}, we establish a more general bound in Theorem~\ref{trm:gfeea}.

\begin{theorem}
	The complexity of FEEA is no more than $34\mathsf{M}(\lfloor\frac{h}{2}\rfloor)\log\lfloor\frac{h}{2}\rfloor + \mathsf{M}(\lfloor \frac{n_0}{2} \rfloor) + 4\mathsf{M}(\lceil \frac{n_0}{2} - \frac{h}{4} \rceil) + 2\mathsf{M}(\lfloor\frac{n_0-h}{2}\rfloor) + 4\mathsf{M}(h) + 2\mathsf{M}(\lfloor\frac{3}{4}h\rfloor) + 4\mathsf{M}(\lfloor\frac{h}{2}\rfloor)$, $(48h+4)\log\lfloor\frac{h}{2}\rfloor + 9n_0 + 22h$ multiplications, $(51h + 4)\log\lfloor\frac{h}{2}\rfloor + 11n_0 + 17h + 2$ additions, and $3h$ inversions.
	\label{trm:gfeea}
\end{theorem}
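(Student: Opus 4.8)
The plan is to reduce the general case (where we only know $n_0 \le 2h$ is violated) to the case $n_0 \le 2h$ already handled by Theorem~\ref{trm:feea}, by running the Euclidean steps in two phases. In the first phase we apply the FEEA of Theorem~\ref{trm:feea} not to $(r_0, r_1)$ directly but to truncated polynomials of degree roughly $h$, following the same truncation mechanism used inside Algorithm~\ref{alg:feea}: since the quotients $q_i$ produced by the EEA depend only on the top $\deg r_0 - \deg r_{i+1} + 1$ coefficients of the operands, we may run a length-$\Theta(h)$ FEEA on $r_0 \upharpoonright \Theta(h)$ and $r_1 \upharpoonright \bigl(\Theta(h) - (n_0-n_1)\bigr)$ and thereby recover $R_j$ for $j = \eta(\lfloor h/2\rfloor)$ or so. This is exactly Steps~\ref{alg:feea}.2--\ref{alg:feea}.6, and its cost is $34\mathsf{M}(\lfloor h/2\rfloor)\log\lfloor h/2\rfloor$ (two invocations of the normal-case or general-case bound of Theorem~\ref{trm:feea} with parameter $\lfloor h/2\rfloor$) plus lower-order terms; the factor $34 = 2 \times 17$ comes from the two recursive FEEA calls in Steps~\ref{alg:feea}.3~and~\ref{alg:feea}.8, and the $\log\lfloor h/2\rfloor$ rather than $\log h$ reflects that each recursive half works at scale $h/2$.

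Next I would account for the "lifting'' steps that reconstruct the true remainders from the truncated computation, i.e.\ Steps~\ref{alg:feea}.4,~\ref{alg:feea}.6,~\ref{alg:feea}.9,~and~\ref{alg:feea}.10. Step~\ref{alg:feea}.4 applies $R_{j-1}^*$ (a $2\times 2$ matrix of polynomials of total degree $\le h_1$) to residual polynomials of degree $< n_0$, costing $2\mathsf{M}(\lfloor n_0/2\rfloor)$-ish multiplications; combined with the analogous Step~\ref{alg:feea}.9 at the smaller scale this produces the $\mathsf{M}(\lfloor n_0/2\rfloor)$, $4\mathsf{M}(\lceil n_0/2 - h/4\rceil)$, and $2\mathsf{M}(\lfloor (n_0-h)/2\rfloor)$ terms, where the degree arguments track the shrinking of $r_j$ as the EEA progresses (after $\eta(h/2)$ steps the degree has dropped by $\approx h/2$, giving the $n_0/2 - h/4$ bound, etc.). The single polynomial division in Step~\ref{alg:feea}.6 contributes via the bound from Section~\ref{sec:poly_div}: $4\mathsf{M}(d_0) + \mathsf{M}(\lceil (d_0+d_1)/2\rceil)$ with $d_0, d_1 = O(h)$, yielding the $4\mathsf{M}(h) + 2\mathsf{M}(\lfloor 3h/4\rfloor) + 4\mathsf{M}(\lfloor h/2\rfloor)$ block once the $15d_0 + d_1 + 7$ and $11d_0 + 2d_1 + 8$ linear terms are rolled into the $22h$ and $17h+2$ tallies. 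The additive-constant bookkeeping ($9n_0$, $11n_0+2$ for the linear passes through degree-$n_0$ objects; $48h+4$, $51h+4$ inherited and doubled from Theorem~\ref{trm:feea}; $3h$ inversions unchanged since truncation does not create new inversions) then assembles the stated expression.

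The main obstacle will be the careful degree accounting in the reconstruction steps --- pinning down that the residual polynomials in Steps~\ref{alg:feea}.4~and~\ref{alg:feea}.9 really have the degrees $\lfloor n_0/2\rfloor$, $\lceil n_0/2 - h/4\rceil$, $\lfloor (n_0-h)/2\rfloor$ claimed, rather than something off by an additive $O(1)$ that would spoil a "no big $O$'' bound --- and verifying that the floors and ceilings propagate consistently through $h_1 = \lfloor h/2\rfloor$, $h_2 = h - (n_0 - n_j)$, and the truncation indices $2h_1 - (n_0-n_1)$ etc. The cleanest way to control this is to invoke the identity $\mathsf{M}(h+1) \le \mathsf{M}(h) + 2h\text{ mults} + 2h\text{ adds}$ and $2\mathsf{M}(h)\le\mathsf{M}(2h)$ from Section~\ref{sec:poly_mul} to absorb all rounding discrepancies into the linear terms, exactly as done for polynomial division; with that in hand the rest is the same recursion-unrolling bookkeeping as the proof of Theorem~\ref{trm:feea}, which the excerpt already declares routine. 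I would present the proof by first stating the recurrence $T(n_0,h) \le T_{\text{phase 1}} + T_{\text{division}} + T_{\text{phase 2}} + (\text{linear reconstruction})$ and then bounding each summand, and --- like Theorem~\ref{trm:feea} --- I expect it suffices to sketch this decomposition and leave the arithmetic to the reader.
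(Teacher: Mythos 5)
Your decomposition---two recursive FEEA calls at scale $\lfloor h/2\rfloor$ bounded by Theorem~\ref{trm:feea} (yielding the $34\mathsf{M}(\lfloor h/2\rfloor)\log\lfloor h/2\rfloor$ term), plus the top-level truncation/reconstruction steps, the single long division, and linear bookkeeping absorbed via $\mathsf{M}(h+1)\le\mathsf{M}(h)+2h$ and $2\mathsf{M}(h)\le\mathsf{M}(2h)$---is exactly the paper's (omitted) argument, which it summarizes as ``adding $2T(h,\lfloor h/2\rfloor)$ and the top-level cost.'' The only quibble is the attribution of individual terms: $\mathsf{M}(\lfloor n_0/2\rfloor)$ and $4\mathsf{M}(h)$ come from the Step~\ref{alg:feea}.6 division bound $4\mathsf{M}(d_0)+\mathsf{M}(\lceil\frac{d_0+d_1}{2}\rceil)$ (with $d_0\le h$, $d_0+d_1\le n_0$) while $2\mathsf{M}(\lfloor\frac{3}{4}h\rfloor)+4\mathsf{M}(\lfloor\frac{h}{2}\rfloor)$ come from the $2\times 2$ matrix updates, not all from the division as you suggest---but this does not affect the validity of the method.
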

The proof is also omitted for brevity. 
The main difference between this case and Theorem~\ref{trm:feea} lies in the top level call of the FEEA.
The total complexity is obtained by adding $2T(h, \lfloor \frac{h}{2} \rfloor)$ and the top-level cost.

It can be verified that, when $n_0\le 2h$, Theorem~\ref{trm:gfeea} presents a tighter bound than Theorem~\ref{trm:feea} since saving on the top level is accounted for.
Note that the complexity bounds in Theorems~\ref{trm:feea}~and~\ref{trm:gfeea} assume that the FEEA solves $s_{l+1}r_0+t_{l+1}r_1=\tilde{r}_{l+1}$ for both $t_{l+1}$ and $s_{l+1}$.
If $s_{l+1}$ is not necessary, the complexity bounds in Theorems~\ref{trm:feea}~and~\ref{trm:gfeea} are further reduced by $2\mathsf{M}(\lfloor\frac{h}{2}\rfloor)$, $3h+1$ multiplications, and $4h+1$ additions.

\subsection{Complexity Comparison}\label{sec:fast_comp}

Using the results in Sections~\ref{sec:poly_mul},~\ref{sec:poly_div},~and~\ref{sec:feea}, we first analyze and then compare the complexities of Algorithms~\ref{alg:gao}~and~\ref{alg:gao_mod} as well as syndrome-based decoding under fast implementations.

In Steps~\ref{alg:gao}.1~and~\ref{alg:gao_mod}.1, $g_1(x)$ can be obtained by an inverse FFT when $n | 2^m-1$ or by the MPI algorithm.
In the latter case, the complexity is given in Section~\ref{sec:poly_mul}.
By Theorem~\ref{trm:gfeea}, the complexity of Step~\ref{alg:gao}.2 is $T(n, t)$ minus the complexity to compute $s_{l+1}$.
The complexity of Step~\ref{alg:gao_mod}.2 is $T(2t,t)$.
The complexity of Step~\ref{alg:gao}.3 is given by the bound in Section~\ref{sec:poly_div}.
Similarly, the complexity of Step~\ref{alg:gao_mod}.3 is readily obtained by using the bounds of polynomial division and multiplication.

All the steps of syndrome-based decoding can be implemented using fast algorithms.
Both syndrome computation and the Chien search can be done by $n$-point evaluations.
Forney's formula can be done by two $t$-point evaluations plus $t$ inversions and $t$ multiplications.
To use the MPE algorithm, we choose to evaluate on all $n$ points.
By Theorem~\ref{trm:gfeea}, the complexity of the key equation solver is $T(2t,t)$ minus the complexity to compute $s_{l+1}$.

Note that to simplify the expressions, the complexities are expressed in terms of three kinds of operations: polynomial multiplications, field multiplications, and field additions.
Of course, with our bounds on the complexity of polynomial multiplication in Theorem~\ref{trm:poly_mul}, the complexities of the decoding algorithms can be expressed in terms of field multiplications and additions.

Given the code parameters, the comparison among these algorithms is quite straightforward with the above expressions.
As in Section~\ref{sec:dir_comp}, we attempt to compare the complexities using only $R$.
Such a comparison is of course not accurate, but it sheds light on the comparative complexity of these decoding algorithms without getting entangled in the details.
To this end, we make four assumptions.
First, we assume the complexity bounds on the decoding algorithms as approximate decoding complexities.
Second, we use the complexity bound in Theorem~\ref{trm:poly_mul} as approximate polynomial multiplication complexities.
Third, since the numbers of multiplications and additions are of the same degree, we only compare the numbers of multiplications.
Fourth, we focus on the difference of the second highest degree terms since the highest degree terms are the same for all three algorithms.
This is because the partial GCD steps of Algorithms~\ref{alg:gao}~and~\ref{alg:gao_mod}, as well as the key equation solver in syndrome-based decoding differ only in the top level of the recursion of FEEA.
Hence Algorithms~\ref{alg:gao}~and~\ref{alg:gao_mod} as well as the key equation solver in syndrome-based decoding have the same highest degree term.

We first compare the complexities of Algorithms~\ref{alg:gao}~and~\ref{alg:gao_mod}.
Using Theorem~\ref{trm:poly_mul}, the difference between the second highest degree terms is given by $\frac{3}{4}(25R-13)n\log^2n$, so Algorithm~\ref{alg:gao} is less efficient than Algorithm~\ref{alg:gao_mod} when $R > 0.52$.
Similarly, the complexity difference between syndrome-based decoding and Algorithm~\ref{alg:gao} is given by $\frac{3}{4}(1-31R)n\log^2n$.
Thus syndrome-based decoding is more efficient than Algorithm~\ref{alg:gao} when $R > 0.032$.
Comparing syndrome-based decoding and Algorithm~\ref{alg:gao_mod}, the complexity difference is roughly $-\frac{9}{2}(2+R)n\log^2n$.
Hence syndrome-based decoding is more efficient than Algorithm~\ref{alg:gao_mod} regardless of the rate.

We remark that the conclusion of the above comparison is similar to those obtained in Section~\ref{sec:dir_comp} except the thresholds are different.
Based on fast implementations, Algorithm~\ref{alg:gao} is more efficient than Algorithm~\ref{alg:gao_mod} for low rate codes, and the syndrome-based decoding is more efficient than Algorithms~\ref{alg:gao}~and~\ref{alg:gao_mod} in virtually all cases.

\section{Case Study and Discussions}\label{sec:case}

\subsection{Case Study}
We examine the complexities of Algorithms~\ref{alg:gao}~and~\ref{alg:gao_mod} as well as syndrome-based decoding for the $(255, 223)$ CCSDS RS code~\cite{CCSDS101} and a $(511,447)$ RS code which have roughly the same rate $R = 0.87$.
Again, both direct and fast implementations are investigated.
Due to the moderate lengths, in some cases direct implementation leads to lower complexity, and hence in such cases, the complexity of direct implementation is used for both.

Tables~\ref{tab:ccsds_gao}~and~\ref{tab:ccsds_bm} list the total decoding complexity of Algorithms~\ref{alg:gao}~and~\ref{alg:gao_mod} as well as syndrome-based decoding, respectively. 
In the fast implementations, cyclotomic FFT~\cite{Chen07} is used for interpolation, syndrome computation, and the Chien search.
The classical EEA with fast polynomial multiplication and division is used in fast implementations since it is more efficient than the FEEA for these lengths.
We assume normal degree sequence, which represents the worst case scenario~\cite{Gathen03}.
The message recovery steps use long division in fast implementation since it is more efficient than Newton iteration for these lengths.
We use Horner's rule for Forney's formula in both direct and fast implementations.

We note that for each decoding step, Tables~\ref{tab:ccsds_gao}~and~\ref{tab:ccsds_bm} not only provide the numbers of finite field multiplications, additions, and inversions, but also list the overall complexities to facilitate comparisons. 
The overall complexities are computed based on the assumptions that multiplication and inversion are of equal complexity, and that as in~\cite{Wang88}, one multiplication is equivalent to $2m$ additions.
The latter assumption is justified by both hardware and software implementation of finite field operations. 
In hardware implementation, a multiplier over $\mathrm{GF}(2^m)$ generated by trinomials requires $m^2-1$ XOR and $m^2$ AND gates~\cite{Sunar99}, while an adder requires $m$ XOR gates.
Assuming that XOR and AND gates have the same complexity, the complexity of a multiplier is $2m$ times that of an adder over $\mathrm{GF}(2^m)$.
In software implementation, the complexity can be measured by the number of word-level operations~\cite{Mahboob05}.
Using the shift and add method as in~\cite{Mahboob05}, a multiplication requires
$m-1$ shift and $m$ XOR word-level operations, respectively while an addition needs only one XOR word-level operation.
Henceforth in software implementations the complexity of a multiplication over $\mathrm{GF}(2^m)$ is also roughly $2m$ times as that of an addition. 
Thus the total complexity of each decoding step in Tables~\ref{tab:ccsds_gao}~and~\ref{tab:ccsds_bm} is obtained by $ N = 2m (N_{mult} + N_{inv}) + N_{add}$, which is in terms of field additions.

	\begin{table*}[htbp]
		\centering
		\caption{Complexity of Syndromeless Decoding}
		\small\addtolength{\tabcolsep}{-3pt}
		\scalebox{0.8}{
		\begin{tabular}{|c|c|c|c|c||c|c|c|c||c|c|c|c||c|c|c|c||c|}
			\hline
			\multicolumn{2}{|c|}{\multirow{3}{*}{$(n,k)$}} & \multicolumn{8}{c|}{Direct Implementation} & \multicolumn{8}{c|}{Fast Implementation}\\
			\cline{3-18}
			\multicolumn{2}{|c|}{} & \multicolumn{4}{c|}{Algorithm~\ref{alg:gao}} & \multicolumn{4}{c|}{Algorithm~\ref{alg:gao_mod}} & \multicolumn{4}{c|}{Algorithm~\ref{alg:gao}} & \multicolumn{4}{c|}{Algorithm~\ref{alg:gao_mod}}\\
			\cline{3-18}
			\multicolumn{2}{|c|}{} & Mult. & Add. & Inv. & Overall & Mult. & Add. & Inv. & Overall & Mult. & Add. & Inv. & Overall & Mult. & Add. & Inv. & Overall\\
			\hline
			\multirow{4}{*}{$(255,223)$} & Interpolation & 64770 & 64770 & 0 & 1101090 & 64770 & 64770 & 0 & 11101090 & 586 & 6900 & 0 & 16276 & 586 & 6900 & 0 & 16276\\
			\cline{2-18}
			& Partial GCD & 16448 & 8192 & 0 & 271360 & 2176 & 1056 & 0 & 35872 & 8224 & 8176 & 16 & 140016 & 1392 & 1328 & 16 & 23856\\
			\cline{2-18}
			& Msg Recovery & 57536 & 4014 & 1 & 924606 & 69841 & 8160 & 1 & 1125632 & 3791 & 3568 & 1 & 64240 & 8160 & 7665  & 1 & 138241\\
			\cline{2-18}
			& Total & 138754 & 76976 & 1 & 2297056 & 136787 & 73986 & 1 & 2262594 & 12601 & 18644 & 17 & 220532 & 10138 & 15893 & 17 & 178373\\
			\hline
			\multirow{4}{*}{$(511,447)$} & Interpolation & 260610 & 260610 & 0 & 4951590 & 260610 & 260610 & 0 & 4951590 & 1014 & 23424 & 0 & 41676 & 1014 & 23424 & 0 & 41676\\
			\cline{2-18}
			& Partial GCD & 65664 & 32768 & 0 & 1214720 & 8448 & 4160 & 0 & 156224 & 32832 & 32736 & 32 & 624288 & 5344 & 5216 & 32 & 101984\\
			\cline{2-18}
			& Msg Recovery & 229760 & 15198 & 1 & 4150896 & 277921 & 31680 & 1 & 5034276 & 14751 & 14304 & 1 & 279840 & 31680 & 30689 & 1 & 600947\\
			\cline{2-18}
			& Total & 556034 & 308576 & 1 & 10317206 & 546979 & 296450 & 1 & 10142090 & 48597 & 70464 & 33 & 945804 & 38038 & 59329 & 33 & 744607\\
			\hline
		\end{tabular}
		}
		\label{tab:ccsds_gao}
	\end{table*}

	\begin{table*}[htbp]
		\centering
		\caption{Complexity of Syndrome-Based Decoding}
		\begin{tabular}{|c|c|c|c|c||c|c|c|c||c|}
			\hline
			\multicolumn{2}{|c|}{\multirow{2}{*}{$(n,k)$}} & \multicolumn{4}{c|}{Direct Implementation} & \multicolumn{4}{c|}{Fast Implementation}\\
			\cline{3-10}
			\multicolumn{2}{|c|}{} & Mult. & Add. & Inv. & Overall & Mult. & Add. & Inv. & Overall\\
			\hline
			\multirow{5}{*}{(255, 223)} & Syndrome Computation & 8128 & 8128 & 0 & 138176 & 149 & 4012 & 0 & 6396\\
			\cline{2-10}
			& Key Equation Solver & 2176 & 1056 & 0 & 35872 & 1088 & 1040 & 16 & 18704\\
			\cline{2-10}
			& Chien Search & 3825 & 4080 & 0 & 65280 & 586 & 6900 & 0 & 16276\\
			\cline{2-10}
			& Forney's Formula & 512 & 496 & 16 & 8944 & 512 & 496 & 16 & 8944\\
			\cline{2-10}
			& Total & 14641 & 13760 & 16 & 248272 & 2335 & 12448 & 32 & 50320\\
			\hline
			\multirow{5}{*}{$(511,447)$} & Syndrome Computation & 32640 & 32640 & 0 & 620160 & 345 & 16952 & 0 & 23162\\
			\cline{2-10}
			& Key Equation Solver & 8448 & 4160 & 0 & 156224 & 4224 & 4128 & 32 & 80736\\
			\cline{2-10}
			& Chien Search & 15841 & 16352 & 0 & 301490 & 1014 & 23424 & 0 & 41676\\
			\cline{2-10}
			& Forney's Formula & 2048 & 2016 & 32 & 39456 & 2048 & 2016 & 32 & 39456\\
			\cline{2-10}
			& Total & 58977 & 55168 & 32 & 1117330 & 7631 & 46520 & 64 & 185030\\
			\hline
		\end{tabular}
		\label{tab:ccsds_bm}
	\end{table*}

Comparisons between direct and fast implementations for each algorithm show that fast implementations considerably reduce the complexities of both syndromeless and syndrome-based decoding, as shown in Tables~\ref{tab:ccsds_gao}~and~\ref{tab:ccsds_bm}.
The comparison between these tables show that for these two high-rate codes, both direct and fast implementations of syndromeless decoding are not as efficient as their counterparts of syndrome-based decoding.
This observation is consistent with our conclusions in Sections~\ref{sec:dir_comp}~and~\ref{sec:fast_comp}.

For these two codes, hardware costs and throughput of decoder architectures based on direct implementations of syndrome-based and syndromeless decoding can be easily obtained by substituting the parameters in Tables~\ref{tab:hw_syn}~and~\ref{tab:hw_gao}; thus for these two codes, the conclusions in Section~\ref{sec:dir_hw} apply.

\subsection{Errors-and-Erasures Decoding}
The complexity analysis of RS decoding in Sections~\ref{sec:time}~and~\ref{sec:tran} has assumed errors-only decoding.
We extend our complexity analysis to errors-and-erasures decoding below.

Syndrome-based errors-and-erasures decoding has been well studied, and we adopt the approach in~\cite{Moon05}.
In this approach, first erasure locator polynomial and modified syndrome polynomial are computed.
After the error locator polynomial is found by the key equation solver,
the errata locator polynomial is computed and the error and erasure values are computed by Forney's formula.
This approach is used in both direct and fast implementation.

Syndromeless errors-and-erasures decoding can be carried out in two approaches.
Let us denote the number of erasures as $\nu$ ($0\leq \nu \leq 2t$), and up to $f=\lfloor\frac{2t - \nu}{2}\rfloor$ errors can be corrected given $\nu$ erasures.
As pointed out in~\cite{Shiozaki90,Gao03}, the first approach is to ignore the $\nu$ erased coordinates, thereby transforming the problem into errors-only decoding of an $(n-\nu, k)$ shortened RS code.
This approach is more suitable for direct implementation.
The second approach is similar to syndrome-based errors-and-erasures decoding described above, which uses the erasure locator polynomial~\cite{Shiozaki90}.
In the second approach, only the partial GCD step is affected, while the same fast implementation techniques described in Section~\ref{sec:tran} can be used in the other steps.
Thus, the second approach is more suitable for fast implementation.

We readily extend our complexity analysis for errors-only decoding in Sections~\ref{sec:time}~and~\ref{sec:tran} to errors-and-erasures decoding.
Our conclusions for errors-and-erasures decoding are the same as those for errors-only decoding: Algorithm~\ref{alg:gao} is the most efficient only for very low rate codes; syndrome-based decoding is the most efficient algorithm for high rate codes.
For brevity, we omit the details and interested readers will have no difficulty filling in the details.

\section{Conclusion}\label{sec:con}
We analyze the computational complexities of two syndromeless decoding algorithms for RS codes using both direct implementation and fast implementation, and compare them with their counterparts of syndrome-based decoding.
With either direct or fast implementation, syndromeless algorithms are more efficient than the syndrome-based algorithms only for RS codes with very low rate.
When implemented in hardware, syndrome-based decoders also have lower complexity and higher throughput.
Since RS codes in practice are usually high-rate codes, syndromeless decoding
algorithms are not suitable for these codes.
Our case study also shows that fast implementations can significantly reduce the decoding complexity.
Errors-and-erasures decoding is also investigated although the details are omitted for brevity.

\section*{Acknowledgment}

This work was financed by a grant from the Commonwealth of Pennsylvania, Department of Community and Economic Development, through the Pennsylvania Infrastructure Technology Alliance (PITA).

The authors are grateful to Dr. J\"urgen Gerhard for valuable discussions. The authors would also like to thank the reviewers for their constructive comments, which have resulted in significant improvements in the manuscript.

%\bibliographystyle{IEEEtran}
%\bibliography{IEEEabrv,rs}

% Generated by IEEEtran.bst, version: 1.12 (2007/01/11)

\end{document}